\newtheorem {Theorem}                 {Theorem}         [section]
\newtheorem {myalgorithm}    [Theorem]  {Algorithm}
\newtheorem {lemma}        [Theorem]  {Lemma}
\journal{arXiv}
\begin{document}
	\begin{frontmatter}
		\title{Minimum strongly biconnected spanning directed subgraph problem}
		\author{Raed Jaberi}
		
		\begin{abstract} 
			 
	     Let $G=(V,E)$ be a strongly biconnected directed graph. In this paper we consider the problem of computing an edge subset $H \subseteq E$ of minimum size such that the directed subgraph $(V,H)$ is strongly biconnected.
		\end{abstract} 
		\begin{keyword}
			Directed graphs \sep Connectivity \sep Approximation algorithms  \sep Graph algorithms \sep strongly connected graphs 
		\end{keyword}
	\end{frontmatter}
	\section{Introduction}
	In $2010$, Wu and Grumbach \cite{WG2010} introduced the concept of strongly biconnected directed graphs, A directed graph $G=(V,E)$ is strongly biconnected if $G$ is strongly connected and the underlying undirected graph of $G$ has no articulation point (Note that the underlying graph of $G$ is connected if $G$ is strongly connected). Let $G=(V,E)$ be a strongly biconnected directed graph. In this paper we consider the problem of computing an edge subset $H \subseteq E$ of minimum size such that the directed subgraph $(V,H)$ is strongly biconnected.
 Observe that optimal solutions for minimum strongly connected spanning subgraph problem are not necessarily strongly biconnected, as shown in Figure \ref{figure:exampleoptimalsol}.

\begin{figure}[htp]
	\centering
	
	\subfigure[]{
	\scalebox{0.79}{
	\begin{tikzpicture}[xscale=2]
	\tikzstyle{every node}=[color=black,draw,circle,minimum size=0.9cm]
	\node (v1) at (-1.6,3.25) {$1$};
	\node (v2) at (-2.5,0) {$2$};
	\node (v3) at (-0.51, 0.5) {$3$};
	\node (v4) at (3.9,-1.5) {$4$};
	\node (v5) at (1.5,1.6) {$5$};
	\node (v6) at (5,1.7) {$6$};
	\node (v7) at (3.7,3.2) {$7$};
	\node (v8) at (4.9,0) {$8$};
	\node (v9) at (-3.43,1) {$9$};
	\node (v10) at (-3.43,3) {$10$};
	\node (v11) at (1,0.1){$11$};
	\node (v12) at (2.6,-2.1) {$12$};
	\node (v13) at (1.7,3.2) {$13$};
	
	\begin{scope}   
	\tikzstyle{every node}=[auto=right]   
	\draw [-triangle 45] (v5) to (v13);
	\draw [-triangle 45] (v13) to (v7);
	\draw [-triangle 45] (v7) to (v6);
	\draw [-triangle 45] (v6) to (v8);
	\draw [-triangle 45] (v8) to (v4);
	\draw [-triangle 45] (v4) to (v12);
	\draw [-triangle 45] (v12) to (v11);
	\draw [-triangle 45] (v11) to (v5);
	\draw [-triangle 45] (v5) to (v1);
	\draw [-triangle 45] (v1) to (v10);
	\draw [-triangle 45] (v10) to (v9);
	\draw [-triangle 45] (v9) to (v2);
	\draw [-triangle 45] (v2) to (v3);
	\draw [-triangle 45] (v3) to (v5);
	\draw [-triangle 45] (v12) to (v2);
	\draw [-triangle 45] (v7) to (v8);

	\end{scope}
	\end{tikzpicture}}
	
	}
	\subfigure[]{
	\scalebox{0.79}{
	\begin{tikzpicture}[xscale=2]
	\tikzstyle{every node}=[color=black,draw,circle,minimum size=0.9cm]
	\node (v1) at (-1.6,3.25) {$1$};
	\node (v2) at (-2.5,0) {$2$};
	\node (v3) at (-0.51, 0.5) {$3$};
	\node (v4) at (3.9,-1.5) {$4$};
	\node (v5) at (1.5,1.6){$5$};
	\node (v6) at (5,1.7) {$6$};
	\node (v7) at (3.7,3.2) {$7$};
	\node (v8) at (4.9,0) {$8$};
	\node (v9) at (-3.43,1) {$9$};
	\node (v10) at (-3.43,3) {$10$};
	\node (v11) at (1,0.1){$11$};
	\node (v12) at (2.6,-2.1) {$12$};
	\node (v13) at (1.7,3.2) {$13$};
	
	\begin{scope}   
	\tikzstyle{every node}=[auto=right]   
	\draw [-triangle 45] (v5) to (v13);
	\draw [-triangle 45] (v13) to (v7);
	\draw [-triangle 45] (v7) to (v6);
	\draw [-triangle 45] (v6) to (v8);
	\draw [-triangle 45] (v8) to (v4);
	\draw [-triangle 45] (v4) to (v12);
	\draw [-triangle 45] (v12) to (v11);
	\draw [-triangle 45] (v11) to (v5);
	\draw [-triangle 45] (v5) to (v1);
	\draw [-triangle 45] (v1) to (v10);
	\draw [-triangle 45] (v10) to (v9);
	\draw [-triangle 45] (v9) to (v2);
	\draw [-triangle 45] (v2) to (v3);
	\draw [-triangle 45] (v3) to (v5);

	\end{scope}
		\end{tikzpicture}}}
\subfigure[]{
	\scalebox{0.79}{
	\begin{tikzpicture}[xscale=2]
	\tikzstyle{every node}=[color=black,draw,circle,minimum size=0.9cm]
	\node (v1) at (-1.6,3.25) {$1$};
	\node (v2) at (-2.5,0) {$2$};
	\node (v3) at (-0.51, 0.5) {$3$};
	\node (v4) at (3.9,-1.5) {$4$};
	\node (v5) at (1.5,1.6) {$5$};
	\node (v6) at (5,1.7) {$6$};
	\node (v7) at (3.7,3.2) {$7$};
	\node (v8) at (4.9,0) {$8$};
	\node (v9) at (-3.43,1) {$9$};
	\node (v10) at (-3.43,3) {$10$};
	\node (v11) at (1,0.1){$11$};
	\node (v12) at (2.6,-2.1) {$12$};
	\node (v13) at (1.7,3.2) {$13$};
	
	\begin{scope}   
	\tikzstyle{every node}=[auto=right]   
	\draw [-triangle 45] (v5) to (v13);
	\draw [-triangle 45] (v13) to (v7);
	\draw [-triangle 45] (v7) to (v6);
	\draw [-triangle 45] (v6) to (v8);
	\draw [-triangle 45] (v8) to (v4);
	\draw [-triangle 45] (v4) to (v12);
	\draw [-triangle 45] (v12) to (v11);
	\draw [-triangle 45] (v11) to (v5);
	\draw [-triangle 45] (v5) to (v1);
	\draw [-triangle 45] (v1) to (v10);
	\draw [-triangle 45] (v10) to (v9);
	\draw [-triangle 45] (v9) to (v2);
	\draw [-triangle 45] (v2) to (v3);
	\draw [-triangle 45] (v3) to (v5);
	\draw [-triangle 45] (v12) to (v2);

	\end{scope}
	\end{tikzpicture}}}
\caption{(a) A strongly biconnected directed graph. (b) An optimal solution for the minimum strongly connected spanning subgraph problem. (c) An optimal solution for the minimum strongly biconnected spanning subgraph problem. Note that this subgraph has strong articulation points but its ungerlying graph has no articulation points}
\label{figure:exampleoptimalsol}
\end{figure}
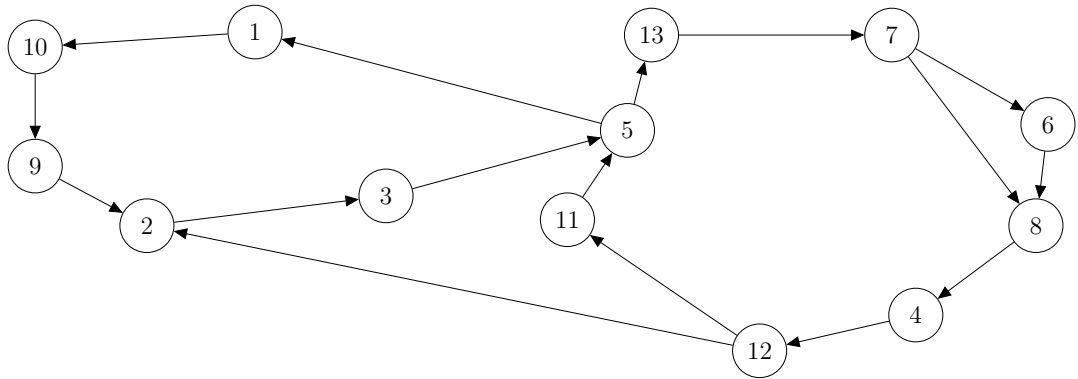
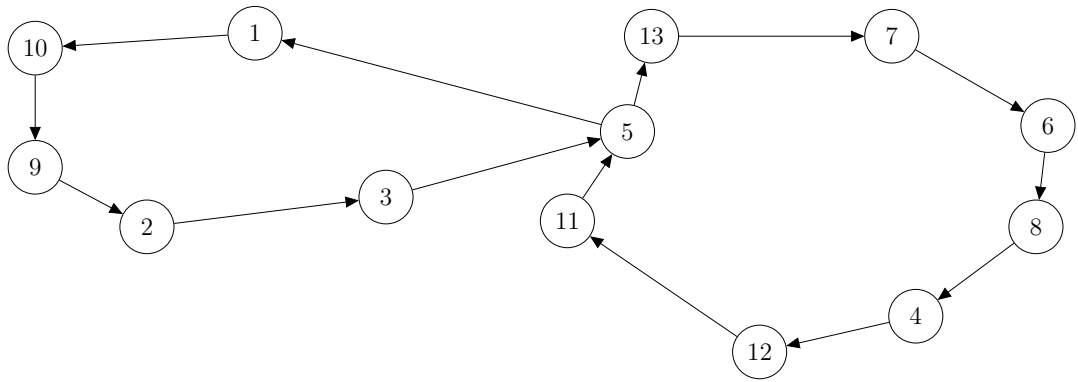
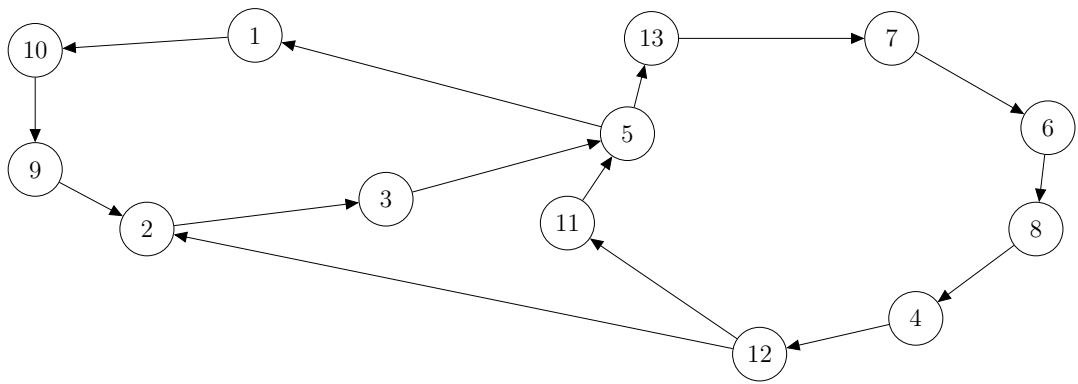

The minimum strongly connected spanning subgraph problem. is NP-complete\cite{G79}. Note that a strongly biconnected graph has a strongly biconnected spanning subgraph with $n$ edges if and only if it has a hamiltonian cycle. Therefore, the minimum strongly biconnected spanning subgraph problem is also NP-complete. Khuller et al. \cite{KRY94}, Zhao et al. \cite{ZNI03} and Vetta \cite{Vetta2001} provided approximation algorithms for The minimum strongly connected spanning subgraph problem.
Wu and Grumbach \cite{WG2010} introduced the concept of strongly biconnected directed graph and strongly biconnected components.   
Strongly biconnected directed graphs and twinless strongly connected directed graphs have been received a lot of attention in 
\cite{WG2010,Botea2018,BoteaIJCAI2018,Botea2015,Raghavan06,Jaberi2019, Jaberi21,Jaberi2021,Jaberi01897,GeorgiadisandKosinas20,Jaberi09793,Jaberi03788,Jaberi47443,Jaberi2022}.
Articulation points and blocks of an undirected graph can be calculated in $O(n+m)$ time using Tarjan's algorithm \cite{TAARJAN72,Schmidt2013}.
Tarjan \cite{TAARJAN72} gave the first linear time algorithm for calculating strongly connected components. Pearce \cite{Pearce2016} and Nuutila et al. \cite{Nuutila1994} provided improved versions of Tarjan's algorithm. Efficient linear time algorithms for finding strongly connected components were given in \cite{Sharir1981,Gabow2000,CM96, Mehlhorn2017,DietzfelbingerMehlhornSanders2014}. 
 Strong articulation points and strong bridges can be computed in linear time in directed graphs using the algorithms of Italiano et al. \cite{ILS12,Italiano2010,FGILS2016}. The algorithms of Italiano et al. \cite{ILS12,Italiano2010} are based on a strong connection between strong articulation points, strong bridges and dominators in flowgraphs. Dominators can be found efficiently in flowgraphs \cite{AHLT99,BGKRTW00,GT16,GT05,LT79}. 
In the following section we consider the minimum strongly biconnected spanning subgraph problem.

\section{Approximation algorithms for the minimum strongly biconnected spanning subgraph problem} 

In this section we study the minimum strongly biconnected spanning subgraph problem.
The following lemma shows an obvious connection between the size of an optimal solution for the minimum strongly biconnected spanning subgraph problem and the size of an optimal solution for the minimum strongly connected spanning subgraph problem.

\begin{lemma} \label{def:sizeofsbss}
Let $G=(V,E)$ be a strongly biconnected directed graph. Then the size of an optimal solution for the minimum strongly connected spanning subgraph problem is less than or equal to the size of an optimal solution for the minimum strongly biconnected spanning subgraph problem.
\end{lemma}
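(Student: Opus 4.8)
The plan is to exploit the fact that every feasible solution for the minimum strongly biconnected spanning subgraph problem is also a feasible solution for the minimum strongly connected spanning subgraph problem, so that the optimum of the latter can only be smaller. First I would fix notation: let $H_{\mathrm{sb}}\subseteq E$ be an optimal solution for the minimum strongly biconnected spanning subgraph problem on $G=(V,E)$, and let $H_{\mathrm{sc}}\subseteq E$ be an optimal solution for the minimum strongly connected spanning subgraph problem on $G$. The goal is to show $|H_{\mathrm{sc}}|\le |H_{\mathrm{sb}}|$.

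The key step is the observation that, by the definition of a strongly biconnected directed graph recalled in the Introduction (a directed graph is strongly biconnected if it is strongly connected and its underlying undirected graph has no articulation point), strong biconnectivity immediately implies strong connectivity. Hence the directed subgraph $(V,H_{\mathrm{sb}})$, being strongly biconnected by assumption, is in particular strongly connected and spans $V$. Therefore $H_{\mathrm{sb}}$ is a feasible solution for the minimum strongly connected spanning subgraph problem on $G$.

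Since $H_{\mathrm{sc}}$ is an optimal, i.e.\ minimum-cardinality, feasible solution for that problem, and $H_{\mathrm{sb}}$ is some feasible solution for it, we conclude $|H_{\mathrm{sc}}|\le |H_{\mathrm{sb}}|$, which is exactly the claimed inequality. I do not expect any real obstacle here: the only thing being used beyond elementary reasoning is the definitional implication ``strongly biconnected $\Rightarrow$ strongly connected,'' together with the trivial monotonicity of an optimum under enlarging the set of feasible solutions. The statement is essentially a sanity check that will be invoked later to compare approximation guarantees for the two problems. \qed
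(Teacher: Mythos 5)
Your proposal is correct and follows exactly the same route as the paper: observe that, by definition, any strongly biconnected spanning subgraph (in particular an optimal one) is strongly connected, hence feasible for the minimum strongly connected spanning subgraph problem, so the latter's optimum cannot exceed the former's. The paper's own proof is just a more terse version of this same argument.
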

\begin{proof}
Let $t$ be the size of an optimal solution for the minimum strongly connected spanning subgraph problem. By definition, every  strongly biconnected spanning subgraph $G_{1}=(V,E_{1})$ of $G$ is strongly connected. Therefore, we have $\vert E_{1} \vert \geq t$. 
\end{proof}
A strongly connected spanning subgraph with size $2(n-1)$ of a strongly connected graph $G=(V,E)$ can be constructed by computing the union of outgoing branching rooted at $v \in V$ and incoming branching rooted at $v$ (\hspace{1sp}\cite{FJ81,KRY94}). But this subgraph is not necessarily strongly biconnected.
 
Algorithm \ref{algo:approximationalgorithmforsbss} can compute a feasible solution for the minimum strongly biconnected spanning subgraph problem.
\begin{figure}[h]
	\begin{myalgorithm}\label{algo:approximationalgorithmforsbss}\rm\quad\\[-5ex]
		\begin{tabbing}
			\quad\quad\=\quad\=\quad\=\quad\=\quad\=\quad\=\quad\=\quad\=\quad\=\kill
			\textbf{Input:} A strongly biconnected graph $G=(V,E)$.\\
			\textbf{Output:} a strongly biconnected subgraph $G_{v}=(V,E_{v})$ in $G$\\
			
			{\small 1}\> select a vertex $v \in V $\\
				{\small 2}\> $E_{v} \leftarrow \emptyset$\\
			{\small 3}\> build a spanning tree $T$ rooted at $v$ in $G$\\
			{\small 4}\>\textbf{for} each edge $(u,w)$ in $T$ \textbf{do} \\
			{\small 5}\>\>$E_{v} \leftarrow E_{v}\cup \left\lbrace (u,w)\right\rbrace $ \\
			{\small 6}\> build $G_{r}=(V,E_{r}),$ where $E_{r}=\left\lbrace (u,w)  \mid (w,u)\in E  \right\rbrace $\\
			{\small 7}\> construct a spanning tree $T_{v}$ rooted at $v$ in $G_{r}$\\
			{\small 8}\>\textbf{for} each edge $(u,w)$ in $T_v$ \textbf{do} \\
			{\small 9}\>\>$E_{v} \leftarrow E_{v}\cup \left\lbrace (w,u)\right\rbrace $ \\
			{\small 10}\> \textbf{while} the underlying graph of $G_{v}=(V,E_{v})$ is not biconnected \textbf{do}\\ 
			{\small 11}\>\> compute the strongly biconnected components of $G_{v}$\\
			{\small 12}\>\> find an edge $(u,w) \in E\setminus E_{v}$ such that $u,w $ do not\\
			{\small 13}\>\>\>  belong to the same strongly connected component of $G_{v}$\\
			{\small 14}\>\>  $E_{v}\leftarrow E_{v} \cup \left\lbrace (u,w) \right\rbrace  $.\\
			
		\end{tabbing}
	\end{myalgorithm}
\end{figure}
\begin{lemma} 
The output of Algorithm \ref{algo:approximationalgorithmforsbss} is strongly biconnected.
\end{lemma}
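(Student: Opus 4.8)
The plan is to establish three facts and then read the conclusion off the definition of ``strongly biconnected''. First, that $G_{v}=(V,E_{v})$ is strongly connected once line~9 has been executed, and stays strongly connected while the \textbf{while} loop runs. Second, that the loop can exit only when the underlying graph of $G_{v}$ is biconnected. Third, that the loop does terminate. Given these, at termination $G_{v}$ is strongly connected and its underlying graph has no articulation point, that is, $G_{v}$ is strongly biconnected.

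The first two facts are routine. The tree $T$ of line~3 is an out-branching rooted at $v$, so after lines 4--5 every vertex is reachable from $v$ in $G_{v}$; and since $(u,w)$ is an edge of $G_{r}$ exactly when $(w,u)\in E$, line~9 inserts into $E_{v}$ the reversal of each edge of the spanning tree $T_{v}$ of $G_{r}$, and these reversed edges form an in-branching towards $v$, so every vertex reaches $v$ in $G_{v}$. Hence $G_{v}$ is strongly connected after line~9, and lines 10--14 only add edges of $E$, so $G_{v}$ remains strongly connected (and a spanning subgraph of $G$) throughout. The second fact is immediate from the loop condition. One should note that line~13 is to be read with ``strongly biconnected component'' in place of ``strongly connected component'': after line~9 the digraph $G_{v}$ has a single strongly connected component, and for a strongly connected digraph the strongly biconnected components are exactly the blocks of the underlying graph, so the condition in line~13 says that $u$ and $w$ lie in different blocks of the underlying graph of $G_{v}$.

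The third fact is where the work lies, and the main obstacle is to show that the loop cannot get stuck, that is, that line~12 always finds an edge while the underlying graph of $G_{v}$ is not biconnected. Suppose another iteration is about to begin, so $G_{v}$ is strongly connected but its underlying graph $U(G_{v})$ has an articulation point $x$; let $C_{1},\dots,C_{k}$ with $k\ge 2$ be the vertex sets of the connected components of $U(G_{v})-x$. Every edge of $E_{v}$ has both endpoints inside one set $C_{i}\cup\{x\}$, hence $E_{v}$ has no edge between two different sets $C_{i},C_{j}$. Now invoke the hypothesis on $G$: since $G$ is strongly biconnected, $U(G)$ has no articulation point, so $U(G)-x$ is connected, and contracting each $C_{i}$ to a point turns it into a connected graph on $k\ge 2$ vertices, which must contain an edge. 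Lifting that edge back, $E$ contains an edge with one endpoint in some $C_{i}$ and the other in some $C_{j}$, $i\ne j$; this edge lies in $E\setminus E_{v}$, and its endpoints, being separated in $U(G_{v})$ by the single vertex $x$, lie in different blocks of $U(G_{v})$ and therefore in different strongly biconnected components of $G_{v}$. So line~12 succeeds as long as $G_{v}$ is not strongly biconnected; each iteration then enlarges $E_{v}$ by one edge of the finite set $E$, so the loop halts after finitely many iterations, and by the second fact it halts with $G_{v}$ strongly biconnected. (The added edge in fact always joins two different blocks of $U(G_{v})$, so the number of blocks strictly decreases and at most $n-2$ iterations occur; this is not needed for the lemma but is useful for bounding the running time.)
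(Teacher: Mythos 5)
Your proposal is correct, and its skeleton matches the paper's proof: lines 1--9 yield a strongly connected spanning subgraph via the out-branching and the reversed in-branching, and the while loop then eliminates the articulation points of the underlying graph. However, the paper's proof consists of exactly those two assertions and nothing more, whereas you supply the one piece of the argument that actually requires the hypothesis that $G$ is strongly biconnected: the proof that line~12 can never fail. Your argument --- take an articulation point $x$ of the underlying graph of $G_{v}$, observe that $U(G)-x$ is connected because $G$ is strongly biconnected, and conclude that $E\setminus E_{v}$ contains an edge joining two different components of $U(G_{v})-x$, hence two different blocks --- is exactly the justification the paper leaves implicit, and without it termination (and hence correctness) of the loop is not established. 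You also rightly flag that line~13 as written ("strongly connected component") is vacuous, since $G_{v}$ is already strongly connected when the loop begins, and that it must be read as "strongly biconnected component" (equivalently, block of the underlying graph) for the algorithm to make sense; the paper does not comment on this. In short: same route, but your version is the complete one.
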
	
\begin{proof}
Lines $1$--$9$ computes a strongly connected spanning subgraph $G_{v}=(V,E_{v})$ of $G$ since there is a path from $v$ to $w$ and a path from $w$ to $v$ for all $w \in V$. The while loop of lines $10$--$14$ removes all articulation points of the underlying graph of $G_{v}=(V,E_{v})$.
\end{proof}

The following lemma shows that the approximation factor of Algorithm \ref{algo:approximationalgorithmforsbss} is $3$
\begin{lemma} \label{def:optsolution2esb}
The number of edges in the output of Algorithm \ref{algo:approximationalgorithmforsbss} is at most $3(n-1)$.
\end{lemma}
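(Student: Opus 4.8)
The plan is to charge the edges of the output $E_{v}$ to the three phases of Algorithm~\ref{algo:approximationalgorithmforsbss} separately. Lines~1--5 insert exactly the $n-1$ edges of the spanning tree $T$, and lines~6--9 insert the reversals of the $n-1$ edges of the spanning tree $T_{v}$ of $G_{r}$; hence after line~9 we have $\vert E_{v}\vert \le 2(n-1)$. Since the while loop of lines~10--14 adds exactly one edge per iteration, it suffices to show that this loop performs at most $n-1$ iterations.

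To bound the number of iterations I would track the number of blocks (biconnected components) of the underlying undirected graph of $G_{v}$. A short counting argument on the block--cut tree --- each block contains at least two vertices, each cut vertex lies in at least two blocks, and the block--cut tree is a tree --- shows that a connected graph on $n$ vertices has at most $n-1$ blocks; in particular this holds for the underlying graph of $G_{v}$ right after line~9, which is connected since lines~1--9 produce a strongly connected spanning subgraph. Next I would argue that every execution of the loop body strictly decreases this number: by the selection rule in lines~12--13 the edge $(u,w)$ joins two vertices lying in different components of $G_{v}$, so $u$ and $w$ lie in distinct blocks, and inserting $(u,w)$ merges every block and cut vertex on the path between them in the block--cut tree into a single block and creates no new block, so the number of blocks drops by at least one. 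The loop terminates once the underlying graph is biconnected, i.e. once a single block remains, so it runs at most $n-2$ times and adds at most $n-2$ edges. Combining the three phases, $\vert E_{v}\vert \le 2(n-1)+(n-2)\le 3(n-1)$.

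The step I expect to be the main obstacle is the strict-decrease claim, which has two parts. First, one must check that while the underlying graph of $G_{v}$ is not biconnected there always is an eligible edge $(u,w)\in E\setminus E_{v}$ whose endpoints lie in different components: this uses that $G$ itself is strongly biconnected, so its underlying graph has no articulation point and therefore contains an edge crossing any articulation point of the underlying graph of $G_{v}$, and such an edge cannot already lie in $E_{v}$, since otherwise that vertex would not be an articulation point of $G_{v}$. Second, one must verify that adding this edge genuinely merges components rather than merely creating a new cycle inside one of them, which is exactly where the fact that $u$ and $w$ lie in different blocks is used. A secondary point is to reconcile the ``strongly biconnected components'' computed in line~11 with the blocks of the underlying graph; carrying out the whole argument on the block--cut tree of the underlying graph is the cleanest way around this, and the monotone-decrease bound transfers to whichever notion of component the algorithm uses.
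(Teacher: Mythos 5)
Your proof is correct and follows essentially the same route as the paper's: charge $2(n-1)$ edges to the two spanning trees built in lines 1--9, and bound the while loop by a count of biconnected pieces that is at most $n-1$ initially and strictly decreases at every iteration. You supply details the paper leaves implicit --- its entire justification for the loop bound is that ``the number of strongly biconnected components of any directed graph is at most $n$'' --- namely the strict decrease via the block--cut tree and the existence of an eligible crossing edge, which you correctly derive from the strong biconnectivity of $G$.
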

\begin{proof}
Lines $1$--$9$ computes a strongly connected spanning subgraph $G_{v}=(V,E_{v})$ of size $2n-2$ since each spanning tree has only $n-1$ edges. The while loop of lines $10$--$14$ removes all articulation points of the underlying graph of $G_{v}=(V,E_{v})$ by adding at most $n-1$ edges to the subgraph $G_{v}=(V,E_{v})$ because the number of strongly biconnected components of any directed graph is at most than $n$. 
\end{proof}

\begin{Theorem}
	Algorithm \ref{algo:approximationalgorithmforsbss} runs in $O(nm)$ time.
\end{Theorem}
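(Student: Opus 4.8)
The plan is to split the running time into the one‑shot preprocessing of lines $1$–$9$ and the iterated work of the while loop in lines $10$–$14$, and to bound each separately.

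First I would handle lines $1$–$9$: choosing $v$, growing the spanning tree $T$ in $G$, forming the reverse graph $G_r$, and growing the spanning tree $T_v$ in $G_r$ are each accomplished by a single graph traversal, so this phase runs in $O(n+m)$ time, and afterwards $|E_v| = 2(n-1)$.

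Next I would turn to the while loop. The key facts to extract from (the proof of) Lemma \ref{def:optsolution2esb} are twofold: the loop adds at most $n-1$ edges to $E_v$ in total, and throughout the loop $G_v=(V,E_v)$ has at most $3(n-1)=O(n)$ edges. Since line $14$ adds exactly one edge per pass, the loop body is executed at most $n-1$ times. For the cost of a single pass I would argue that every operation other than the edge search is performed on the sparse graph $G_v$ and is therefore cheap: testing whether the underlying graph of $G_v$ is biconnected (line $10$) and computing the strongly biconnected components of $G_v$ (line $11$, i.e.\ one strongly‑connected‑components computation together with one articulation‑point computation) each take $O(n)$ time by Tarjan's linear‑time algorithms \cite{TAARJAN72}, while the update in line $14$ is $O(1)$. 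The one potentially expensive step is the search in lines $12$–$13$ for an edge of $E\setminus E_v$ joining two distinct components of $G_v$; I would implement it by labelling each vertex with its component during line $11$ and then scanning the edge list of $E$, spending $O(1)$ per edge, for a total of $O(m)$ per pass. Hence one pass costs $O(m)$ and the whole loop costs $O(nm)$.

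Combining the two phases gives $O(n+m)+O(nm)=O(nm)$. The only point needing real care — the main obstacle — is to notice that the per‑iteration subroutines act on $G_v$, which stays sparse with $O(n)$ edges, so the genuine bottleneck is the repeated scan over all of $E$, which is exactly what produces the $nm$ factor; everything else is bookkeeping that is dominated by it.
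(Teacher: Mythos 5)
Your proof is correct and follows essentially the same route as the paper's: lines $1$--$9$ cost $O(n+m)$ via graph traversals, the while loop executes $O(n)$ times because each pass adds one edge and the component count bounds the total number of added edges by $n-1$, and each pass costs $O(m)$, giving $O(nm)$ overall. You simply spell out the per-iteration accounting (the $O(m)$ edge scan dominating the linear-time component computations on the sparse $G_v$) that the paper leaves implicit.
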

\begin{proof}
	A spanning tree of a strongly biconnected graph can be constructed in $O(n+m)$ time using depth first search or breadth first search. Furthermore, lines $10$--$14$  take $O(nm)$ time since the number of strongly biconnected components of any directed graph is at most $n$..
\end{proof}
 
\begin{lemma} \label{def:optsolution2esb}
Let $G=(V,E)$ be a strongly biconnected directed graph. Let $G_{1}=(V,E_{1})$ be the output of a $t$-approximation algorithm for the minimum strongly connected spanning subgraph problem for the input $G$. A strongly biconnected subgraph can be obtained from $G_{1}=(V,E_{1})$ with size at most $(1+t)h$ in polynomial time, where $h$ is the size of an optimal solution for the minimum strongly biconnected spanning subgraph problem.
\end{lemma}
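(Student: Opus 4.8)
The plan is to bootstrap the fix-up procedure already analysed for Algorithm~\ref{algo:approximationalgorithmforsbss}, but to start it from $G_{1}$ instead of from the union of two branchings. First I would bound the size of the starting subgraph. By definition of a $t$-approximation algorithm, $\vert E_{1}\vert \le t\cdot s$, where $s$ is the size of an optimal solution for the minimum strongly connected spanning subgraph problem on $G$; and by Lemma~\ref{def:sizeofsbss}, $s\le h$. Hence $\vert E_{1}\vert \le th$, and $G_{1}=(V,E_{1})$ is strongly connected.

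Next I would run the while loop of lines~$10$--$14$ of Algorithm~\ref{algo:approximationalgorithmforsbss} on the subgraph $G_{1}$: as long as the underlying graph of the current subgraph is not biconnected, select an edge of $E$ that has not yet been chosen and whose endpoints lie in distinct strongly biconnected components of the current subgraph, and add it. The correctness argument is the same as the one used for the output of Algorithm~\ref{algo:approximationalgorithmforsbss}: since $G_{1}$ is already strongly connected, every intermediate subgraph remains strongly connected, and the loop stops precisely when the underlying graph has no articulation point, i.e. when the current subgraph is strongly biconnected. For the number of added edges I would reuse the counting argument from the proof that Algorithm~\ref{algo:approximationalgorithmforsbss} outputs at most $3(n-1)$ edges: the number of strongly biconnected components of a directed graph on $n$ vertices is at most $n$, each iteration of the loop strictly decreases this number, and the number reaches $1$ when the graph becomes strongly biconnected, so at most $n-1$ edges are added.

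Combining the two estimates, the resulting subgraph is a strongly biconnected spanning subgraph of $G$ with at most $th+(n-1)$ edges. Finally, since the underlying undirected graph of a strongly biconnected graph is connected, it has at least $n-1$ edges, so $h\ge n-1$, and therefore $th+(n-1)\le th+h=(1+t)h$, as required. The whole construction is polynomial: obtaining $G_{1}$ is polynomial by hypothesis, and the while loop runs in $O(nm)$ time, exactly as in the running-time analysis of Algorithm~\ref{algo:approximationalgorithmforsbss}.

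The step I expect to require the most care is the one that also underlies the correctness of Algorithm~\ref{algo:approximationalgorithmforsbss}: verifying that whenever the current subgraph is strongly connected but its underlying graph still has an articulation point, there really is an unused edge of $G$ joining two different strongly biconnected components of the current subgraph, and that adding it strictly decreases the number of such components. This uses that $G$ itself is strongly biconnected together with the structural properties of strongly biconnected components; no argument beyond the one already needed for Algorithm~\ref{algo:approximationalgorithmforsbss} is required.
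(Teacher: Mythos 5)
Your proposal matches the paper's own proof: both run the while loop of lines 10--14 of Algorithm~\ref{algo:approximationalgorithmforsbss} on $G_{1}$, bound $\vert E_{1}\vert\le th$ via Lemma~\ref{def:sizeofsbss}, observe that the loop adds at most $n-1$ edges, and absorb that term into $h$ using a lower bound on the size of any strongly biconnected spanning subgraph. Your version is, if anything, slightly more explicit about why the loop terminates and why $n-1\le h$.
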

\begin{proof}
A strongly biconnected subgraph can be obtained from $G_{1}=(V,E_{1})$ by running the while loop of lines $10$--$14$ of Algorithm \ref{algo:approximationalgorithmforsbss} on $G_{1}=(V,E_{1})$. By lemma \ref{def:sizeofsbss}, the size of an optimal solution for the minimum strongly connected spanning subgraph problem is less than or equal to the size of an optimal solution for the minimum strongly biconnected spanning subgraph problem. 
 This while loop takes $O(nm)$ time. The while loop of lines $10$--$14$ adds at most $n-1$ edges to $G_{1}$. Clearly, each  strongly biconnected spanning subgraph of $G$ has at least $n$ edges.
\end{proof}

The following lemma shows an obvious connection between the size of an optimal solution for the minimum strongly biconnected spanning subgraph problem and the size of an optimal solution for the minimum 2-vertex connected spanning undirected subgraph problem.

\begin{lemma} \label{def:sizeofsbssbsubgraph}
Let $G=(V,E)$ be a strongly biconnected directed graph. Then the size of an optimal solution for minimum 2-vertex connected spanning undirected subgraph problem is less than or equal to the size of an optimal solution for the minimum strongly biconnected spanning subgraph problem.
\end{lemma}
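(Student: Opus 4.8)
The plan is to mimic the proof of Lemma~\ref{def:sizeofsbss}. Let $h$ be the size of an optimal solution for the minimum strongly biconnected spanning subgraph problem, and let $H=(V,E_{H})$ with $\vert E_{H}\vert=h$ be such an optimal subgraph. Since $H$ is strongly biconnected, by definition it is strongly connected and its underlying undirected graph, call it $U(H)$, is connected and has no articulation point; hence $U(H)$ is $2$-vertex connected. Moreover, because $E_{H}\subseteq E$, the graph $U(H)$ is a spanning subgraph of the underlying undirected graph $U(G)$ of $G$, so $U(H)$ is a feasible solution for the minimum $2$-vertex connected spanning undirected subgraph problem on $U(G)$.

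The second step is to compare sizes. The underlying undirected graph $U(H)$ has at most $\vert E_{H}\vert=h$ edges, since each undirected edge of $U(H)$ arises from at least one directed edge of $E_{H}$, and a pair of anti-parallel directed edges of $H$ contributes only a single undirected edge. Therefore $U(H)$ is a $2$-vertex connected spanning undirected subgraph of $U(G)$ with at most $h$ edges, and consequently the size of an optimal solution for the minimum $2$-vertex connected spanning undirected subgraph problem is at most $h$, which is exactly the claimed inequality.

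I do not expect a genuine obstacle here: the statement is, like Lemma~\ref{def:sizeofsbss}, essentially an unpacking of the definition of strong biconnectivity together with the observation that passing from a directed subgraph to its underlying undirected graph never increases the edge count. The only point deserving a word of care is that anti-parallel edge pairs collapse to one undirected edge, but this only decreases the number of edges and hence strengthens rather than weakens the bound.
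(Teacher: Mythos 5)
Your proposal is correct and follows essentially the same route as the paper: the underlying undirected graph of an optimal strongly biconnected spanning subgraph is a feasible (biconnected, spanning) solution for the undirected problem, so the undirected optimum is at most $h$. Your version is in fact slightly more careful than the paper's, which contains a copy-paste slip (it refers to the "minimum strongly connected spanning subgraph problem" instead of the 2-vertex connected undirected one) and does not mention the collapse of anti-parallel edge pairs, a point you correctly note only strengthens the bound.
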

\begin{proof}
Let $t$ be the size of an optimal solution for the minimum strongly connected spanning subgraph problem. By definition, the underlying graph of every  strongly biconnected spanning subgraph $G_{1}=(V,E_{1})$ is biconnected. Therefore, we have $\vert E_{1} \vert \geq t$. 
\end{proof}

\begin{lemma} \label{def:optsolution2esb}
There is a $17/6$ approximation algorithm for the minimum strongly biconnected spanning subgraph problem.
\end{lemma}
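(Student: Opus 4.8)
The plan is to glue together two black-box approximations — one for the minimum strongly connected spanning subgraph problem and one for the minimum $2$-vertex connected spanning undirected subgraph problem — and to charge both of them against the \emph{same} quantity $h$ using Lemma~\ref{def:sizeofsbss} and Lemma~\ref{def:sizeofsbssbsubgraph}. Write $\tilde{G}$ for the underlying undirected graph of $G$, and let $h$ be the size of an optimal solution for the minimum strongly biconnected spanning subgraph problem on $G$. By Lemma~\ref{def:sizeofsbss}, an optimal strongly connected spanning subgraph of $G$ has at most $h$ edges, and by Lemma~\ref{def:sizeofsbssbsubgraph}, an optimal $2$-vertex connected spanning subgraph of $\tilde{G}$ has at most $h$ edges.

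The algorithm I would use is the following. Run the $3/2$-approximation algorithm for the minimum strongly connected spanning subgraph problem (\hspace{1sp}\cite{Vetta2001,ZNI03}) on $G$, obtaining an edge set $A\subseteq E$ with $(V,A)$ strongly connected and $|A|\le\tfrac{3}{2}h$. Independently, run a $4/3$-approximation algorithm for the minimum $2$-vertex connected spanning subgraph problem on $\tilde{G}$, obtaining an undirected edge set $F$ with $(V,F)$ $2$-vertex connected and $|F|\le\tfrac{4}{3}h$. For every edge $\{x,y\}\in F$, pick one directed edge of $E$ with endpoints $x$ and $y$ (at least one exists, because $\tilde{G}$ is by definition the underlying graph of $G$), and let $\vec{F}\subseteq E$ be the set of these choices, so $|\vec{F}|=|F|$. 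Output $H=A\cup\vec{F}$.

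The verification should be short. The digraph $(V,H)$ is strongly connected because $A\subseteq H$ and $(V,A)$ is already strongly connected. The underlying undirected graph of $(V,H)$ contains $F$ as a spanning subgraph, hence is $2$-vertex connected and in particular has no articulation point; therefore $(V,H)$ is strongly biconnected. For the size, $|H|\le|A|+|\vec{F}|\le\tfrac{3}{2}h+\tfrac{4}{3}h=\tfrac{17}{6}h$. All steps run in polynomial time — the two subroutines do, and turning $F$ into $\vec{F}$ and forming the union cost $O(n+m)$ — so this yields a $17/6$-approximation algorithm.

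I do not anticipate a genuine obstacle: the combinatorial core is the one-line observation that a union of a strongly connected spanning subgraph with anything else is still strongly connected, while a supergraph of a $2$-vertex connected spanning graph is still $2$-vertex connected, so the two requirements decouple cleanly. The two points worth stating carefully are (i) that the undirected output $F$ on $\tilde{G}$ must be realized by actual edges of $E$ — which is immediate, since every edge of $\tilde{G}$ lifts to at least one directed edge — and (ii) that the additivity $\tfrac{3}{2}h+\tfrac{4}{3}h=\tfrac{17}{6}h$ relies precisely on Lemmas~\ref{def:sizeofsbss} and~\ref{def:sizeofsbssbsubgraph} to bound both ingredient optima by the \emph{single} optimum $h$ rather than by two unrelated quantities; the numerology then forces the choice of the best available factors, $3/2$ for the directed ingredient and $4/3$ for the undirected one.
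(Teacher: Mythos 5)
Your proposal is correct and follows essentially the same route as the paper: run Vetta's $3/2$-approximation for the minimum strongly connected spanning subgraph on $G$ and the Vempala--Vetta $4/3$-approximation for the minimum $2$-vertex connected spanning subgraph on the underlying graph, bound both optima by $h$ via Lemmas~\ref{def:sizeofsbss} and~\ref{def:sizeofsbssbsubgraph}, and take the union to get $\tfrac{3}{2}h+\tfrac{4}{3}h=\tfrac{17}{6}h$. Your write-up is in fact more careful than the paper's, which leaves implicit both the lifting of the undirected edge set back to directed edges of $E$ and the verification that the union is strongly biconnected.
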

\begin{proof}
Let $G=(V,E)$ be a strongly biconnected directed graph. A strongly biconnected subgraph can be obtained fromm
$G$ by calculating a strongly connected spanning subgraph of $G$ and a biconnected spanning subgraph of the underlying graph of $G$. Moreover, let $h$ be the size of an optimal solution for the minimum strongly biconnected spanning subgraph problem.
Let $i$ be the size of an optimal solution for the minimum strongly connected spanning subgraph problem and let $s$ be the size of an optimal solution for the minimum 2-vertex connected spanning undirected subgraph problem.
By lemma \ref{def:sizeofsbss}, we have $h\geq i$. Moreover, by lemma \ref{def:sizeofsbssbsubgraph}, we have $h\geq s$. A feasible solution of size at most $(17/6)h$ for the minimum strongly biconnected spanning subgraph problem can be obtained by running Vetta's algorithm \cite{Vetta2001} on $G$ and the algorithm of Vempala and Vetta \cite{VV00} on the underlying graph of $G$.
 
\end{proof}

\section{Open Problems}
 Results of Mader \cite{Mader71,Mader72} imply that the number of edges in each minimal $k$-vertex-connected undirected graph is less than or equal to  $kn$ \cite{CT00}. Results of Edmonds \cite{Edmonds72} and Mader \cite{Mader85} imply that the number of edges in each minimal $k$-vertex-connected directed graph is at most $2kn$ \cite{CT00}. Jaberi\cite{Jaberi47443} proved that each minimal $2$-vertex-strongly biconnected directed graph has at most $7n$ edges. The proof in is based on results of Mader \cite{Mader71,Mader72,Mader85} and Edmonds \cite{Edmonds72}.

 We leave as open problem whether the number of edges in each minimal strongly biconnected directed graph is at most $2n$ edges.
 An important question is whether there are better algorithms for the problems in \cite{Jaberi03788,Jaberi47443,DietzfelbingerJaberi2015}.

\end{document}